\documentclass[journal,twoside,web]{IEEEtrans}
\usepackage{amsmath,amssymb,amsfonts} \usepackage{orcidlink}
\usepackage{wrapfig,color}
\usepackage{subcaption}  

\graphicspath{{./},{./figures/}}

\usepackage{dblfloatfix} 
\usepackage{placeins} 

\newcommand{\trace}{{\mathrm{tr}}}

\newtheorem{thm}{Theorem}

\newtheorem{remark}{Remark}
\newtheorem{problem}{Problem}
\def\spacingset#1{\def\baselinestretch{#1}\small\normalsize}
\spacingset{1}

\begin{document}

\title{On the Isospectral Nature of\\Minimum-Shear Covariance Control}

\author{ 
Ralph Sabbagh,   
Asmaa Eldesoukey,
Mahmoud Abdelgalil\orcidlink{https://orcid.org/0000-0003-1932-5115}, and Tryphon T. Georgiou\orcidlink{https://orcid.org/0000-0003-0012-5447}, 
\thanks{Electrical and Computer Engineering, University of California, San Diego, La Jolla, CA, USA, mabdelgalil@ucsd.edu.}
\thanks{Mechanical and Aerospace Engineering, University of California, Irvine, Irvine, CA, USA, tryphon@uci.edu.\newline 
\hspace*{10pt}{Research supported by NSF under ECCS-2347357, AFOSR under FA9550-24-1-0278, and ARO under W911NF-22-1-0292.}
}}

\maketitle

\begin{abstract}
We revisit Brockett's {\em attention} in the context of bilinear gradient flow of an ensemble, and explore an alternative formalism that aims to reduce {\em shear} by minimizing the {\em conditioning} number of the dynamics; equivalently, we minimize the range of the eigenvalues of the dynamics. Remarkably, the evolution is isospectral, and this property is inherited by the coupled nonlinear dynamics of the control problem from a Lax isospectral flow.
\end{abstract}

\begin{keywords}
Gradient flow control, attention-aware control, integrable dynamics, Lax equation
\end{keywords}
%

\section{Introduction}

The familiar differential Riccati equation of linear-quadratic regulator theory is the 'archetype' of geometric integrability: a nonlinear differential equation that can be expressed as the 'shadow' of a rotation in higher dimensions. The passage from Riccati dynamics to integrable systems and Lax equations marked a foundational shift in the study of nonlinear dynamics, by revealing hidden linear structure and the conserved quantities that accompany it. In the present work we revisit Brockett's notion of control {\em attention}  \cite{brockett1997minimum}, originally introduced to mitigate sensitivity to state measurements, and to take a fresh look by exploring as an alternative paradigm the minimization of {\em shear} deformation in covariance control.  The bilinear structure of the resulting dynamics leads naturally to a Lax equation for the product of state and costate of the shear-minimization problem. Remarkably, the isospectral property of the Lax equation is inherited by the coupled nonlinear dynamics of the control problem, yielding in particular a conserved spectrum for the strain tensor, which in our setting coincides with the feedback gain matrix.

Our investigation was inspired by Brockett's notion of ``{\em minimum-attention control},'' proposed as a way to quantify implementation costs in engineering practice at an abstract level. As a guiding principle, Brockett suggested that ``the easiest control law to implement is that of a constant input,'' and that ``anything else requires some attention'' \cite{Brockett1997MinAttn,Brockett2003MinimizingAttention}. He proposed to quantify this ``attention'' through the gradient of the control law with respect to spatial and temporal coordinates. That insight has since influenced a broad body of work on resource-aware and event-based control \cite{Jang2015BallCatching,Anta2010MinAttnAnytime,Donkers2014MinAttnLinear,heemels2012introduction,Dirr2016EnsembleMeanVariance,NagaharaNesic2020CDC,nowzari2016distributed,eldesoukey2025collective}, where the timing and frequency of interventions are treated as part of the control design. From one perspective, attention is tied to how often the controller must update; from another, it is tied to how sensitively the control law depends on the information on which it acts. It is this broader interpretation that we adopt here.

From this viewpoint, we propose an alternative to Brockett's quadratic attention penalty by focusing directly on the {\em shear} induced by the control field, or equivalently on the {\em conditioning} of the instantaneous deformation it generates. Concretely, rather than penalizing the Frobenius size of the gain matrix, we seek to reduce the spread of its eigenvalues, i.e., its spectral diameter. This choice is motivated by the observation that a narrow eigenvalue range suppresses anisotropic stretching and compression, and therefore mitigates the sensitivity of the flow to spatial variations. In this sense, minimizing shear provides a complementary formalism for reducing attention: it targets not merely the magnitude of the gain, but the directional disparity in the deformation that the controller imposes on the ensemble.

This perspective also reveals an unexpected integrable structure. The bilinear gradient-flow formulation gives rise to a Lax equation for the product of state and costate variables, and hence to an isospectral evolution. The striking point is that this isospectrality is not confined to an auxiliary lifted system: it is inherited by the nonlinear control dynamics themselves, endowing the evolution with a conserved spectral ``fingerprint.'' Thus, the same mechanism that underlies classical integrable models appears here in a control-theoretic setting, where it constrains the evolution of the strain tensor/feedback gain along optimal trajectories. In this respect, the present work is influenced both by the geometric-control developments of Bloch, Brockett, Marsden, Murray, Ratiu, and others \cite{bloch1992completely,bloch1996euler,bloch2003nonholonomic}, and by our recent work \cite{abdelgalil2025holonomy,abdelgalil2024collective,sabbagh2025minimizing}. The resulting picture is close in spirit to integrable systems such as the Toda lattice \cite{flaschka1974toda,bloch1990convexity}: the flow preserves spectral data while the state itself evolves nontrivially.

\section{Problem formulation}
We consider an ensemble of particles steered under the influence of a time-varying quadratic potential.
The particles are situated at $x_t^{(k)}\in \mathbb R^n$ for $k\in\{1,\ldots,N\}$, and the collection of states
\[
X_t=\begin{bmatrix}x_t^{(1)},&\hdots ,&x_t^{(N)}
\end{bmatrix}
\]
obeys the linear dynamical equation
\[
\dot X_t=A_t X_t, \mbox{ with } A_t\in\mathbb R^{n\times n},
\]
where the time-varying `gain' matrix $A_\cdot$ is our {\em control input}. Individual particles drift in directions $u^{(k)}_t=A_tx^{(k)}_t$, dictated either by the time-varying potential field
\[
U(t,x_t)=\frac{1}{2} \trace (x^\top_t A_t x_t),
\]
or determined individually and locally, via the gain matrix $A$ that is specified centrally and then broadcast to all.

For simplicity we assume that the mean of the ensemble is the origin, and that the control objective is to modify the sample covariance (modulo the normalization by $1/N$)
\[
\Sigma_t = X_tX_t^\top,
\]
from an initial value $\Sigma_0$ to a terminal one  $\Sigma_1$, at $t=1$. Thus, we view $\Sigma_\cdot$ as the state of the ensemble that obeys the ensemble dynamics (differential Lyapunov equation)
\begin{subequations}\label{eq:Lyap_all}
\begin{equation}\label{eq:Lyap}
\dot{\Sigma}_t = A_t \Sigma_t + \Sigma_t A_t, \mbox{ for }t\in[0,1].
\end{equation}
For simplicity of the analysis, we assume throughout that the state $\Sigma_t$ is nonsingular, and that the state space of the ensemble is the cone of positive definite $n\times n$ matrices $\mathbb S_{+}(n)$. We make one further simplification: we assume that $\det(\Sigma_0)=\det(\Sigma_1)$ and that the flow is volume-preserving, i.e., that $\det(\Sigma_t)$ remains constant, or equivalently, that\footnote{When $\det(\Sigma_0)\neq\det(\Sigma_1)$, most reasonable control costs lead to $\trace(A_t)$ being constant and equal to $\frac12 \left(\log(\det\Sigma_1)-\log(\det\Sigma_0)\right)$.}
\begin{equation}\label{eq:traceA}
\trace(A_t)=0, \mbox{ for }t\in[0,1].
\end{equation}
\end{subequations}

When guiding an ensemble of dynamical systems to change their formation, it is desirable to minimize the dependence of the control law on the respective spatial coordinates; this dependence is quantified 
by the spatial derivative
\[
\nabla_x A_tx_t= A_t.
\]
In order to minimize this dependence, Roger Brockett introduced the concept of {\em attention} \cite{brockett1997minimum}, as an integral functional on $A_t$, to reflect how attentive the control needs to be to keep track of the needed actuation. An alternative angle from which we can approach the underlying issue is of space dependence is to view $A_t$ as the {\em shear} tensor of the velocity field that drives the particles, that  quantifies the
relative compression and stretching in different directions. Equivalently, we may consider the incremental state transition map
\[
\exp(A_t \, \delta t)\simeq I+A_t \, \delta t
\]
that dictates the instantaneous deformation of the ensemble, and in this case the {\em conditioning number} is of importance.

To minimize {\em attention}, Brockett advocated the time integral of 
\[
f^{\rm att}(A_t):=\trace (A_t^2)
\]
as a suitable cost functional. On the other hand, focusing on {\em shear} and the {\em conditioning} of the state transition map, it is natural to consider instead the time integral of the spectral range\footnote{Here, $\lambda_{\max},\,\lambda_{\min}$ denote the maximal and minimal eigenvalues.}
\[
f^{\rm cond}(A_t):=\lambda_{\max}(A_t)-\lambda_{\min}(A_t),
\]
that quantifies directly the spread of the eigenvalues of $A_t$. We note that, since
\[
f^{\rm cond}(A) \leq\sqrt{2\,f^{\rm att}(A)}\leq \sqrt{n}\,f^{\rm cond}(A),
\]
these two options, minimizing Brockett's integral attention or the integral spread of the eigenvalues, are expected to mitigate in qualitatively similar ways the sensitivity of the control in the precise knowledge of particle-states.
 
It is of note that, $f^{\rm cond}(\cdot)$ defines a (Minkowski) norm on the space of symmetric and traceless matrices. However, it is of Finsler type, non-differentiable, and thus it is convenient to replace with a smooth surrogate
\[
g_\theta(A):=\frac{1}{\theta}\log\trace(e^{\theta A})
+
\frac{1}{\theta}\log\trace(e^{-\theta A}),
\mbox{ for }\theta>0.
\]
It is standard and easy to show that
\[
f^{\rm cond}(A)\leq g_\theta(A)\leq f^{\rm cond}(A)+\frac{2\log(n)}{\theta},
\]
and thereby, $g_\theta$ approximates the spectral diameter $f^{\rm cond}$ from above, with uniform error $O(\theta^{-1})$.
 
In light of the above, we herein analyze the control-theoretic value of the spectral diameter, as a way to mitigate shear and attention, in steering a collection of Gaussian particles. To this end, we define
\begin{equation}\label{eq:functional}
J_\theta(A_\cdot)
:=
\int_0^1 g_\theta(A_t)^2\,dt,
\end{equation}
and formulate the following problem.
We note that the square in the integrand ensures coercivity of the functional in $L^2$.

\begin{problem}\label{problem}
Given $\Sigma_0,\,\Sigma_1\in \mathbb S_{+}(n)$ with $\det(\Sigma_0)=\det(\Sigma_1)$, establish the existence of a minimizer of
\[
\min\{ J_\theta(A_\cdot) \mid 
(A_\cdot,\Sigma_\cdot) \mbox{ satisfying }\eqref{eq:Lyap_all} \mbox{ with b.c., } \Sigma_0,\,\Sigma_1\}.
\]
\end{problem}

\begin{remark}
The natural classes of matrix functions where to consider solutions of Problem \ref{problem} are\footnote{Here, $\mathbb S$ are the $n\times n$ symmetric matrices, $\mathbb S_0(n):=\{M\in\mathbb S(n)\mid \trace(M)=0\}$, and $\mathbb S_+(n)$ the symmetric positive definite, as before.}
\[
A_\cdot\in L^2([0,1];\mathbb S_0(n)),
\qquad
\Sigma_\cdot\in H^1([0,1];\mathbb S_+(n)).
\]
 On these classes,
$J_{\theta}$ is coercive and weakly lower semicontinuous, and therefore the
problem admits at least one minimizer.
\end{remark}

\section{Analysis}
We begin with the Hamiltonian 
\[
H(A,\Sigma,\Lambda)
=
g_\theta(A)^2+\trace\big(\Lambda(A\Sigma+\Sigma A)\big)
\]
for Problem \ref{problem}, suppressing the time indexing for simplicity.
Setting the variation with respect to $A$ (where $A\in\mathbb S_0(n)$) equal to zero gives  
\begin{equation}\label{eq:M}
g_\theta(A)G_\theta(A)+ M=0,
\end{equation}
for the `momentum' matrix
\begin{equation}\label{eq:MM}
M
:=
\frac12(\Sigma\Lambda+\Lambda\Sigma)
-\frac{1}{2n}\trace(\Sigma\Lambda+\Lambda\Sigma)I,
\end{equation}
and
$$
G_\theta(A):=\nabla_A g_\theta(A)
=
\frac{e^{\theta A}}{\trace(e^{\theta A})}
-
\frac{e^{-\theta A}}{\trace(e^{-\theta A})}.
$$
The variation with respect to $\Sigma$ gives the costate equation
\[
\dot\Lambda=-(\Lambda A+A\Lambda).
\]
The product of state with costate
\[
L_t:=\Lambda_t\Sigma_t
\]
satisfies
\begin{align}\nonumber
\dot L
&=
\dot\Lambda\,\Sigma+\Lambda\,\dot\Sigma\\\nonumber
&=
(-\Lambda A-A\Lambda)\Sigma+\Lambda(A\Sigma+\Sigma A)\\\nonumber
&=
\Lambda\Sigma A-A\Lambda\Sigma,
\end{align}
which is in the isospectral Lax form
\begin{equation}\label{eq:Lax}
\dot L=[L,A]
\end{equation}
with $[L,A]=LA-AL$ denoting the commutator. Thus, we have apparently arrived at an integrable system (see below).

Let us recap. With $M$ being the (traceless) symmetric part of $L$, and $\Omega$ the anti-symmetric, re-introducing the time-indexing,
\begin{align}\label{Ldecomp}
L_t=M_t+\Omega_t+\frac{1}{n}\trace(L_t)I,
\end{align}
and the equations of motion become
\begin{subequations}\label{eq:main}
\begin{align}
\dot\Sigma_t&=A_t\Sigma_t+\Sigma_tA_t\label{eq:a}\\
\dot M_t&=\Omega_tA_t-A_t\Omega_t\label{eq:b}\\
\dot\Omega_t&=M_tA_t-A_tM_t\label{eq:c}
\end{align}
and since
\begin{align} \label{eq:A}
M_t=-g_\theta(A_t)G_\theta(A_t)
\end{align}
is a function of $A_t$, $M_tA_t-A_tM_t=0$ and \eqref{eq:c} becomes
\begin{equation}
\dot \Omega = 0 \tag{\ref{eq:c}'}. \label{eq:cp}
\end{equation}
Thus, $\Omega_t=\Omega$ remains constant throughout.
\end{subequations}

\begin{remark}\label{rem2}
The above system of equations (\ref{eq:a},\ref{eq:b},\ref{eq:cp},\ref{eq:A}) specify a two-point boundary value problem, with boundary conditions $\Sigma_0$ and $\Sigma_1$. It can be numerically solved using a shooting method, starting from $(\Sigma_0,L_0)$ and computing $A_t$ from the running value of $M_t$ using \eqref{eq:A}. The existence of a value $L_0$, so that the terminal value $\Sigma_1$ matches the given boundary condition specified is explained next.
\end{remark}

\begin{remark}\label{rem3}
    The computation of $A_t$ from $M_t$ using \eqref{eq:A} amounts to solving $n$ scalar transcendental equations
   \begin{equation}\label{eq:transcendental}
   \mu_i = -g_\theta(\lambda) \left( \frac{e^{\theta \lambda_i}}{\sum_j e^{\theta \lambda_j}} - \frac{e^{-\theta \lambda_i}}{\sum_j e^{-\theta \lambda_j}} \right),
   \end{equation}
    subject to the constraint $\sum \lambda_i = 0$,
    for the eigenvalues $\lambda_i$ of $A_t$, from those $\mu_i$ of $M_t$, as $A_t$ and $M_t$ share the same eigenvectors.
\end{remark}

\begin{remark}\label{constancyofeigenvalues}
    It is a rather remarkable fact that the system of equation (\ref{eq:a},\ref{eq:b},\ref{eq:cp},\ref{eq:A}) inherits from the isospectral flow \eqref{eq:Lax} the constancy of eigenvalues. Specifically,
    from \eqref{eq:Lax} we readily see that the eigenvalues of $L_t$ remain constant in time. Since the skew symmetric part of $L_t$, $\Omega$, is constant, the spectrum of the (traceless) symmetric part $M_t$ is also constant. Hence, the eigenvalues of $A_t$ also remain constant, and the system of equations \eqref{eq:transcendental} only needs to be solved at the start of the interval in the shooting method of Remark \ref{rem2}.
\end{remark}

\begin{thm}\label{theorem1}
Given positive definite matrices $\Sigma_0$, $\Sigma_1$ with $\det(\Sigma_0)=\det(\Sigma_1)$, there exists a value $L_0$ so that the minimizer of Problem \ref{problem} can be obtained by integrating the system \eqref{eq:a}--\eqref{eq:cp} together with \eqref{eq:A}.
\end{thm}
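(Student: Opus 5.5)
The strategy is to start from the existence of a minimizer $(A^\star_\cdot,\Sigma^\star_\cdot)$, guaranteed by the Remark following Problem \ref{problem} (coercivity and weak lower semicontinuity of $J_\theta$ on $L^2([0,1];\mathbb S_0(n))\times H^1([0,1];\mathbb S_+(n))$). We then show that this minimizer satisfies the first-order necessary conditions derived above. Since those conditions are exactly \eqref{eq:a}--\eqref{eq:cp} with \eqref{eq:A}, defining $L_0:=\Lambda_0\Sigma_0$ from the multiplier $\Lambda_\cdot$ will give the claim.

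\textbf{Step 1: a Lagrange multiplier.} Consider the endpoint map $E:A_\cdot\mapsto\Sigma_1$, where $\Sigma_\cdot$ solves \eqref{eq:Lyap} from $\Sigma_0$. Its differential at $A_\cdot$ is
\[
\delta A_\cdot\mapsto \Phi_1\int_0^1\Phi_t^{-1}(\delta A_t\Sigma_t+\Sigma_t\delta A_t)\Phi_t^{-\top}\,dt,
\]
with $\Phi_t$ the state transition of $\dot X=A_tX$. Writing $\Sigma_t=\Phi_t\Sigma_0\Phi_t^\top$, the integrand is $\Phi_t^{-1}(\delta A_t\Sigma_t+\Sigma_t\delta A_t)\Phi_t^{-\top}$ as displayed. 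The constraint set is $\{\Sigma:\det\Sigma=\det\Sigma_0\}$, and $\trace(\Sigma^{-1}\dot\Sigma)=2\trace A=0$ keeps $\Sigma_t$ on it.

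We then need surjectivity of this differential onto the tangent space $\{S\in\mathbb S(n):\trace(\Sigma_1^{-1}S)=0\}$. For this, choose $\delta A_t$ supported near a single time $t$. The map $\delta A\mapsto\delta A\Sigma+\Sigma\delta A$ from $\mathbb S_0(n)$ onto $\{S:\trace(\Sigma^{-1}S)=0\}$ is a bijection; it is a Lyapunov operator with $\Sigma\succ0$. This gives regularity, so the Lagrange multiplier rule in $L^2$ yields an absolutely continuous $\Lambda_\cdot$. The multiplier satisfies the costate equation $\dot\Lambda=-(\Lambda A+A\Lambda)$ and the stationarity condition \eqref{eq:M}, which is obtained by differentiating $g_\theta^2$ and projecting onto $\mathbb S_0(n)$. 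Smoothness of $g_\theta$ is used here. Because the problem is normal, no abnormal multiplier appears.

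\textbf{Step 2: regularity of $A$ from $M$.} Equation \eqref{eq:A} holds a.e. with $M_t$ continuous. We show that the map $\mathcal G: A\mapsto g_\theta(A)G_\theta(A)=\tfrac12\nabla(g_\theta^2)$ is invertible on $\mathbb S_0(n)$. First, $g_\theta^2$ is strictly convex on $\mathbb S_0(n)$: $g_\theta$ is a convex spectral function, $g_\theta\ge f^{\rm cond}>0$ away from $A=0$, and the log-sum-exp parts are strictly convex transversally to $I$. Second, it is coercive. Hence $\nabla(g_\theta^2)$ is a homeomorphism of $\mathbb S_0(n)$; this is the unique solvability in Remark \ref{rem3}. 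Consequently $A_t=\mathcal G^{-1}(-M_t)$ is continuous, and bootstrapping makes the triple $(\Sigma,\Lambda,A)$ smooth.

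\textbf{Step 3: the reduced system.} With smoothness established, the computation preceding \eqref{eq:Lax} and the decomposition \eqref{Ldecomp} give \eqref{eq:a}, \eqref{eq:b} and \eqref{eq:cp}. Uniqueness for the ODE initial value problem (locally Lipschitz right-hand side, $\mathcal G^{-1}$ being smooth) then shows that integrating from $(\Sigma_0,L_0)$ reproduces the minimizer.

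\textbf{Main obstacle.} I expect the hard part to be the invertibility and smoothness of $\mathcal G$ at $A=0$, where $g_\theta^2$ is not strictly convex in an obvious way. Equally delicate is the normality and regularity of the constraint in Step 1, including keeping $\Sigma_t$ inside the open cone. For the first, I would verify directly that the Hessian of $g_\theta^2$ at $0$ is $2g_\theta(0)\nabla^2 g_\theta(0)$ with $g_\theta(0)=2\log(n)/\theta>0$, and that $\nabla^2 g_\theta(0)$ is positive definite on $\mathbb S_0(n)$.
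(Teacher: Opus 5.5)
Your argument is correct, but it proves the complementary half of what the paper proves.

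The paper's proof is purely an existence argument. It exhibits an admissible finite-cost control, $A_t\equiv\log\Phi$, where $\Phi$ is the positive definite solution of $\Phi\Sigma_0\Phi=\Sigma_1$; this control is traceless because $\det\Sigma_0=\det\Sigma_1$. It derives $L^2$-coercivity from $\trace(A^2)\le n\,g_\theta(A)^2$. It uses Gr\"onwall and Arzel\`a--Ascoli to show that a weak $L^2$ limit of controls still steers $\Sigma_0$ to $\Sigma_1$. It gets weak lower semicontinuity from convexity of $g_\theta^2$. The link from that minimizer to the system \eqref{eq:a}--\eqref{eq:cp}, \eqref{eq:A} and to $L_0$ is left to the formal Hamiltonian computation preceding the theorem.

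You instead take existence from the Remark after Problem~\ref{problem} and make that formal link rigorous. Surjectivity of the endpoint differential onto the tangent space of the fixed-determinant leaf gives a normal multiplier. Invertibility of $\nabla(g_\theta^2)$ on $\mathbb S_0(n)$ makes $A_t$ a continuous, then smooth, function of $M_t$, so the Lax computation is legitimate. Uniqueness for the reduced initial value problem then shows that every minimizer is recovered from $L_0=\Lambda_0\Sigma_0$. So your route supplies what the paper glosses over, and the paper's route supplies what you assume.

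Be aware that the Remark you cite is only asserted. Coercivity plus weak lower semicontinuity of $J_\theta$ are not enough by themselves. Because the admissible set is cut out by a nonlinear endpoint constraint, you also need it to be nonempty and weakly sequentially closed, which is exactly the content of the paper's proof. Together the two arguments give a complete proof.

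Three minor points:
\begin{itemize}
\item Your formula for the differential of the endpoint map is missing the right factor $\Phi_1^\top$.
\item Fr\'echet differentiability of $J_\theta$ on $L^2$ needs the linear growth of $\nabla(g_\theta^2)=2g_\theta G_\theta$, not just smoothness of $g_\theta$. This growth does hold, since $G_\theta$ is bounded.
\item The obstacle you anticipate at $A=0$ is not one. Each log-trace-exp term in $g_\theta$ has a positive semidefinite Hessian whose kernel is spanned by $I$. Together with $g_\theta>0$, this makes the Hessian of $g_\theta^2$ positive definite on $\mathbb S_0(n)$ at every $A$, so $\nabla(g_\theta^2)$ is a global diffeomorphism of $\mathbb S_0(n)$.
\end{itemize}
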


The cost functional \eqref{eq:functional} is convex with respect to $A_\cdot$. Moreover, the dynamics $\dot\Sigma=A\Sigma+\Sigma A$ are controllable on the manifold of positive definite matrices with fixed determinant, ensuring that at least one feasible path exists connecting any valid $\Sigma_0$ and $\Sigma_1$. Indeed, the `super-operator'
\begin{equation}\label{super}
\dot\Sigma_t \mapsto A_t=\int_0^\infty e^{-\Sigma_t \tau}\dot\Sigma_t e^{-\Sigma_t \tau}d\tau,
\end{equation}
solves $\dot\Sigma=A\Sigma+\Sigma A$ for $A_t$ given $\dot\Sigma_t$ and $\Sigma_t$, and is onto on the tangent space of the fixed-determinant leaf. However, the mapping $A_\cdot\mapsto\Sigma_1$ is not linear. Because of that, the uniqueness of the minimizer cannot be guaranteed, in general. The detailed proof follows.
\begin{figure*}[!ht]
    \centering
    \includegraphics[width=\linewidth]{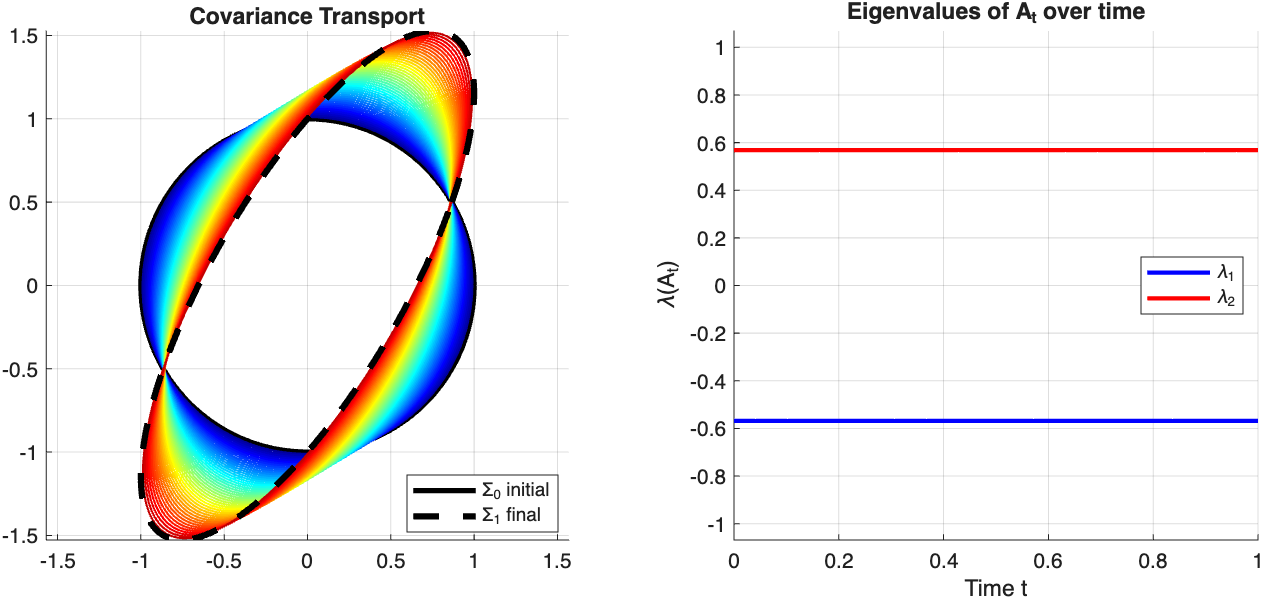}
    \caption{Simulation of the planar covariance transport solving the two-point boundary value problem via the shooting method. \textbf{Left:} The optimal trajectory of the covariance ensemble flowing from the initial state $\Sigma_0$ (solid black) to the target state $\Sigma_1$ (dashed black). The intermediate states are colored chronologically, illustrating the smooth, volume-preserving deformation under the spectral spread cost. \textbf{Right:} The eigenvalues of the optimal stretch matrix $A_t$ over time $t \in [0,1]$. Consistent with the integrable Lax structure and the isospectral flow dynamics established in the analysis, the eigenvalues remain strictly constant throughout the transport.}
    \label{fig:covariance_transport_comparison}
\end{figure*}
{\em Proof of Theorem \ref{theorem1}}:
First, there exists an admissible control with finite cost. Indeed, if
\[
\Phi=\Sigma_1^{1/2}\Big(\Sigma_1^{1/2}\Sigma_0\Sigma_1^{1/2}\Big)^{-1/2}\Sigma_1^{1/2},
\]
then taking $A_t=\log(\Phi)$ constant on $[0,1]$ gives $J_{\theta}(A_\cdot)<\infty$. Moreover,
\[
\det(\Phi)^2=\frac{\det(\Sigma_1)}{\det(\Sigma_0)}=1,
\]
so
\[
\trace(A_t)=\log\det(\Phi)=0,
\]
and the control is traceless.

Next, we verify coercivity in $L^2$. Let
\[
s(A):=\lambda_{\max}(A)-\lambda_{\min}(A).
\]
For symmetric $A$, we have $g_\theta(A)\ge s(A)$. Since $A$ is traceless, all eigenvalues lie in the interval $[\lambda_{\min}(A),\lambda_{\max}(A)]$, and therefore
\[
\trace(A^2)\le n\,s(A)^2\le n\,g_\theta(A)^2.
\]
Hence
\begin{equation}\label{eq:spectral-l2}
J_\theta(A_\cdot)\ge \frac1n\int_0^1 \trace(A_t^2)\,dt.
\end{equation}
Thus the sublevel sets of $J_\theta$ are bounded in $L^2$, and any minimizing sequence $\{A_\cdot^{(k)}\}$ admits a subsequence, not relabeled, such that
\begin{equation}\label{eq:weakconv}
A_\cdot^{(k)}\rightharpoonup A_\cdot
\quad \text{weakly in }L^2([0,1];\mathbb{S}_0(n)).
\end{equation}

By Gr\"onwall,
\begin{align*}
\|\Phi_t^{(k)}\|
&\le
\exp\Big(\int_0^t \|A_s^{(k)}\|\,ds\Big)
\le
\exp(\|A^{(k)}\|_{L^1})
\\&\le
\exp(\|A^{(k)}\|_{L^2}),
\end{align*}
so $\{\Phi_\cdot^{(k)}\}$ is uniformly bounded on $[0,1]$. Also,
\[
\|\dot\Phi_\cdot^{(k)}\|_{L^1}
\le
\|A_\cdot^{(k)}\|_{L^1}\|\Phi_\cdot^{(k)}\|_{L^\infty}
\le
\|A_\cdot^{(k)}\|_{L^2}\|\Phi_\cdot^{(k)}\|_{L^\infty},
\]
hence $\{\Phi_\cdot^{(k)}\}$ is bounded in $W^{1,1}([0,1],\mathbb R^{n\times n})$. By Arzel\`a--Ascoli, a further subsequence converges uniformly to a continuous limit $\Phi_\cdot$. Standard arguments then show that $\dot\Phi_t=A_t\Phi_t$, $\Phi_0=I$, and $\Phi_t\Sigma_0\Phi_t^\top$ satisfies the end-point condition at $t=1$. Since $A_\cdot\in L^2$ and $\Phi_\cdot$ is bounded, the corresponding state $\Sigma_\cdot=\Phi_\cdot\Sigma_0\Phi_\cdot^\top$ belongs to $H^1([0,1],\mathbb S(n))$.

Finally, the map
\[
A\mapsto \frac{1}{\theta}\log\trace(e^{\theta A})
\]
is convex on $\mathbb S(n)$, and so is
\[
A\mapsto \frac{1}{\theta}\log\trace(e^{-\theta A}).
\]
Hence $A\mapsto g_\theta(A)$ is convex and nonnegative, and therefore $A\mapsto g_\theta(A)^2$ is convex as well. It follows that
\[
\int_0^1 g_\theta(A_t)^2\,dt
\le
\liminf_{k\to\infty}
\int_0^1 g_\theta(A_t^{(k)})^2\,dt.
\]
Thus $J_\theta$ is weakly lower semicontinuous. The weak limit $A_\cdot$ is feasible and achieves the minimum. \hfill $\Box$

\begin{remark}
Uniqueness of the minimizer is not guaranteed. While the cost $J_\theta$ is convex, the admissible set is not, due to the nonlinear dependence of the endpoint constraint $\Sigma_1$ on the control $A_\cdot$. Furthermore, the problem inherits the orthogonal symmetry of the boundary data: any orthogonal matrix $Q$ preserving both $\Sigma_0$ and $\Sigma_1$ generates an equivalent minimizer $(Q^\top A_t Q, Q^\top \Sigma_t Q)$ with identical cost, since $g_\theta$ is invariant under orthogonal conjugation. Thus, uniqueness can at best be expected modulo the common symmetry group of the boundary covariances.
\end{remark}
\section{Example}

We implemented the shooting method to solve the two-point boundary value problem \eqref{eq:main} as explained in Remarks \ref{rem2} and \ref{rem3}.
In this, we find an initial value for $L_0$ that steers the covariance from $\Sigma_0$ to $\Sigma_1$ following the integrable Lax dynamics $\dot{L} = [L, A]$. A representative path is shown in Fig. \ref{fig:covariance_transport_comparison}.

\bibliographystyle{plain}
\bibliography{References}
\end{document}